\newcommand{\E}{\mathbb{E}}
\newcommand{\ed}{\mathrm{d}}
\newcommand{\R}{\mathbb{R}}
\newcommand{\id}{{\bf 1}}
\renewcommand{\P}{\mathbb{P}}
\newcommand{\EIS}{\mathrm{EIS}}
\newcommand{\CPI}{\mathrm{CPI}}
\newcommand{\DT}{\ed {\cal T}(t)}
\newtheorem{theorem}{Theorem}
\newtheorem{corollary}[theorem]{Corollary}
\newcolumntype{L}[1]{>{\raggedright\let\newline\\\arraybackslash\hspace{0pt}}p{#1}}
\theoremstyle{definition}
\newtheorem{definition}[theorem]{Definition}
\DeclareMathOperator{\Var}{Var}
\numberwithin{equation}{section}
\numberwithin{theorem}{section}
\title{Collectivised Pension Investment with Homogeneous Epstein--Zin Preferences}
\author{John Armstrong, Cristin Buescu}
\begin{document}
	
\maketitle

\begin{abstract}
In a collectivised pension fund, investors agree that any money remaining in the fund
when they die can be shared among the survivors.

We compute analytically the optimal investment-consumption
strategy for a fund of $n$ identical investors with homogeneous Epstein--Zin preferences,
investing in the Black--Scholes market in continuous time but consuming in discrete time.
Our result holds for arbitrary mortality distributions.

We also compute the optimal strategy for an infinite fund of investors, and prove 
the convergence of the optimal strategy as $n\to \infty$. The proof of convergence shows
that effective strategies for inhomogeneous funds can be obtained using the optimal strategies
found in this paper for homogeneous funds, using the results of \cite{ab-main}.

We find that a constant consumption strategy is suboptimal even for
infinite collectives investing in markets where assets provide no return
so long as investors are ``satisfaction risk-averse.'' This suggests that annuities
and defined benefit investments will always be suboptimal investments.

We present numerical results examining the importance of the fund size, $n$, and the market
parameters.
\end{abstract}

\section*{Introduction}

A group of individuals may group together and invest income for their retirement
in a collective fund. When an individual dies, any funds associated with that individual
are then divided among the survivors. The paper \cite{ab-main} shows how to model the
management of these funds mathematically and argues that they should yield significantly
better results for investors than traditional pension investment models.

This paper complements \cite{ab-main} by computing the optimal investment strategy for collective investment under the following assumptions:
\begin{enumerate}[(i)]
	\item  There are $n$ identical investors in the collective. \label{assumption:homogeneous}
	\item  The fund may invest in continuous time in a Black--Scholes--Merton market with one risky asset. \label{assumption:blackScholes}
	\item  The mortality of the individuals is independent: that is there is no systematic longevity risk.
		   Mortality occurs with a known probability distribution.
	\item  \label{assumption:epsteinZin} The preferences of each individual are given by homogeneous Epstein--Zin preferences
		   with mortality (as defined in \cite{ab-main}).
	\item  Consumption occurs in discrete time (for example once per year). \label{assumption:discreteTime}		   	
\end{enumerate}
We formulate the problem mathematically and are able to give analytical formulae for the optimal consumption and investment at
each time.

Assumption \eqref{assumption:homogeneous}, that the investors are identical, is not a significant restriction. The
paper \cite{ab-main} shows that once one knows how to manage a fund of identical individuals, it
is easy to devise very effective management strategies for {\em inhomogeneous funds}, i.e.\ funds of diverse individuals.

Assumption \eqref{assumption:blackScholes}, that the market is a Black--Scholes--Merton market is, of course, restrictive.
This is the simplest type of market we could consider. We are trading full realism for analytic tractability. The assumption that the market has only one risky asset is not restrictive. It follows from the mutual fund theorem arguments of \cite{armstrongClassification} that essentially the same strategy can be used in a Black--Scholes--Merton market with $n$ stocks.

Assumption \eqref{assumption:epsteinZin}, that the preferences are given by homogeneous Epstein--Zin utility, is a central
assumption to this paper, and is key to the analytic tractability of the problem. We consider
different possible models for preferences over consumption with mortality in \cite{ab-main}
and find that two models stand out as having particularly attractive properties. These
models are called {\em homogeneous Epstein--Zin preferences} and {\em exponential Kihlstrom--Mirman preferences} in \cite{ab-main} and we will use the same terminology. As the results of this
paper demonstrate, homogeneous Epstein--Zin preferences have the tremendous advantage
of being analytically tractable. This stems from the homogeneity property of these preferences. By contrast, we can only expect to solve analogous problems with exponential Kihlstrom--Mirman preferences using numerical methods (we show how to do this in \cite{ab-exponential}). It is 
possible to define inhomogeneous Epstein--Zin preferences with mortality (see \cite{ab-main}). We believe that the techniques of \cite{ab-exponential} could be applied to such problems, but that one cannot, in general, expect analytical results without homogeneity.

Assumption \eqref{assumption:discreteTime}, that consumption occurs in discrete time is not a significant restriction. Indeed one might argue it adds to the realism of the model.

The apparent mismatch between discrete time consumption  and continuous time investment
is the key technical trick required to obtain our analytic results. As is explained in \cite{armstrongClassification}, the Black--Scholes--Merton market is isomorphic to a linear
market in continuous time and this explains the analytic tractability of many problems involving this
market. However, in discrete time the Black--Scholes--Merton market is fundamentally non-linear, and
this explains the analytic intractability of problems such as Merton's investment problem in discrete time. We deduce that we must allow continuous time investment to obtain
analytic results.

On the other hand, if we are to allow arbitrary mortality distributions and continuous time consumption, then there is no hope of obtaining analytic results as one wouldn't even be able to write down the mortality distribution in general. Our assumption of discrete time consumption is essentially equivalent to assuming that mortality occurs in discrete time, and so restricts the set of mortality distributions we are considering to ensure tractability.

\medskip

The advantage of analytic tractability is the insight it gives us into optimal
pension investment.

For example, we can analyse how consumption varies over time. We find that except for very special cases, constant consumption is never optimal.
This is interesting because many people see a defined-benefit pension which provides constant real-terms income as the ``gold standard'' for a pension fund.
Our result shows that chasing constant income is, in fact, suboptimal.

It perhaps isn't so surprising that if market returns are non-zero there are advantages to taking some risk by investing in equities.

Nor should it be surprising that delaying consumption to benefit from market returns can also be advantageous. We are able to make this
precise by computing the elasticity of intertemporal substitution in our model.

It is perhaps, surprising that even if one assumes that both equities and bonds provide no return, it is still not optimal to receive a constant income,
if one is satisfaction risk-averse (see below for a definition). Under these circumstances it can be optimal to spend earlier (if the primary risk one perceives is the risk of
dying before one can consume one's pension) or to spend later (if the primary risk one perceives is the risk of living for a long time on an
inadequate pension).

\medskip

Let us now describe the structure of the paper.

Section \ref{sec:epsteinZinDef} reviews the definition of homogeneous Epstein--Zin preferences
with mortality.

Section \ref{sec:optimalInvestmentProblem} states the optimal investment problems we will solve.
The problem depends upon the number of individuals $n$. We will also state an investment problem
for a fund which is intended to represent the limiting case $n=\infty$.

Section \ref{sec:epsteinZinFormulae} solves the optimal investment problems analytically in the cases where
$n=1$ and $n=\infty$.

Section \ref{sec:consumption} computes how consumption and wealth vary over time, giving analytic
descriptions of their probability distributions.

Section \ref{sec:epsteinZinFiniteCollective} generalizes the results to arbitrary fund sizes $n$.

Section \ref{sec:convergence} uses our results to provide a rigorous justification
for the claim that our models for finite $n$ converge to the case $n=\infty$. This is of
obvious theoretical interest in its own right, but we remark that the proof
is essential to demonstrating that the strategies for inhomogeneous funds of described
in \cite{ab-main} will be effective.

\section{Homogeneous Epstein--Zin utility with mortality}
\label{sec:epsteinZinDef}

Let us recall the definition of homogeneous Epstein--Zin utility with mortality
given in \cite{ab-main}. In order to give a crisp
definition, we first define a convention for how we will we handle
algebra using infinite and infinitesimal values.

\begin{definition}
	The {\em extended positive reals} $\R^{++}$	is the set
	\[
	\R^{++} = \R^+ \cup \{\epsilon^\alpha \mid \alpha \in \R \setminus \{0\} \}
	\]
	where $\epsilon$ is a symbol representing an infinitesimal value. We extend addition, multiplication and raising to a real power to $\R^{++}$ in the obvious way:
	\begin{align*}
	x + \epsilon^\alpha &= \begin{cases}
	x & \text{when }\alpha > 0 \\
	\epsilon^\alpha & \text{otherwise}
	\end{cases} \\
	\epsilon^\alpha + \epsilon^\beta &= \epsilon^{\min\{\alpha,\beta\}} \\
	x \epsilon^\alpha &= \epsilon^\alpha \\
	\epsilon^\alpha \epsilon^\beta &= \epsilon^{\alpha + \beta} \\
	(\epsilon^\alpha)^\beta &= \epsilon^{\alpha \beta}
	\end{align*}
\end{definition}

We now assume that we are given a time grid ${\cal T}=\{t_0,t_0+\delta t,t_0+2\, \delta t,
t_0+3\, \delta t
\ldots,T-\delta t\}$ where $t_0$ is some initial time, $\delta t$ is a fixed time
step and $T$ is the time by which we assume mortality is certain.

We will model the individual's consumption as a non-negative stochastic process $(\gamma_t)_{t \in \cal T}$
in a filtered probability space $(\Omega, {\cal F}, {\cal F}_t, \P)$.
The time of death $\tau$ is a stopping time taking values in ${\cal T}$.
Our convention is that any consumption up to {\em and including} time $\tau$ may effect
the individual's utility, but any consumption occurring after time $\tau$ will be ignored.

\begin{definition}
	\label{def:homogeneousEpsteinZin}	
	{\em Homogeneous Epstein--Zin utility with mortality} is defined
	for a non-negative consumption process $(\gamma_t)_{t \in {\cal T}}$ and a stopping
	time $\tau$ taking values in ${\cal T}$. It depends on parameters $\alpha \in (-\infty,1) \setminus\{0\}$, $\rho \in (-\infty,1) \setminus\{0\}$,
	and $0<\beta = e^{-bt} \leq 1$.
	It is the $\R^{++}$-valued random process
	defined recursively by
	\begin{equation}
	Z_t(\gamma, \tau) =
	\begin{cases}
	\epsilon^{\frac{1}{\alpha}} & t > \tau; \\
	\left[ \gamma_t^\rho + \beta \, \E_t( Z_{t+\delta t}(\gamma, \tau)^\alpha )^\frac{\rho}{\alpha} \right]^\frac{1}{\rho} & \text{otherwise}.
	\end{cases}
	\label{eq:defepsteinzin}
	\end{equation}
\end{definition}

\section{The optimal investment problem}
\label{sec:optimalInvestmentProblem}

A general optimal investment problem for a homogeneous collective fund was described
mathematically in \cite{ab-main}. In this section we summarize the formulation
of \cite{ab-main}, specializing to the case of interest for this paper.

We model a collective fund of $n$ investors. The fund may invest in either a riskless bond
which grows at a risk-free rate of $r$ or in a stock whose price at time $t$, is denoted $S_t$.
The stock price at time $t_0$ is given. At subsequent times $S_t$ obeys
the SDE
\begin{equation}
\ed S_t = S_t(\mu \, \ed t + \sigma \, \ed W_t) \qquad S_{t_0}
\label{eq:blackScholesMerton}
\end{equation}
for a constant drift $\mu$ and volatility $\sigma$, and a $1$-dimensional Brownian motion $W_t$.

We will model consumption taking place in discrete time on a grid ${\cal T}$
as described in the previous section.
Since we are modelling consumption in discrete time, we may safely model mortality
in discrete time. We let $\tau$ be a random variable modelling the time of death of
a representative individual. We assume that $\tau$ has a probability distribution
given by $p_t \DT$ where $\DT$ is the measure given by adding the Dirac measures
associated to the grid points of ${\cal T}$.

Let $s_t$ denote the survival probability between times $t$ and $t+\delta t$. That is
\begin{equation}
s_t =
\frac{\sum_{i=1}^\infty p_{t + i \delta t}}{\sum_{i=0}^\infty p_{t + i \delta t}}
\label{eq:defs}
\end{equation}
Write $n_t$ for the number of individuals
individuals with a time of death greater than or equal to $t$. The process $(n_t)_{t \in {\cal T}}$ is a Markov process, with initial value $n_{t_0}$. Note, however, that $n_{t+\delta t}$ will be ${\cal F}_t$ measurable. We choose this convention for $n_t$ as it works well with our existing convention that individuals who die at age $t$ still consume at time $t$.

For the case of a finite number of individuals,
the transition probability of $n_t$ moving from
a value of $n$ at time $t$ to the value $i$ at time $t+\delta t$ is
given by
\begin{equation}
S_t(n,i):= \binom{n}{i} (s_t)^i (1-s_t)^{n-i}.
\label{eq:defS}
\end{equation}

We also wish to write down a formal optimal investment problem for the case
of a fund with $n=\infty$ investors. In this case we will define $n_t=\infty$
for all times up to $T$.

We will write $a_t$ for proportion of the fund invested in stock at time $t$.
We will write $X_t$ for the value of the fund per survivor at time
$t$ before consumption.
We will define $X_t=0$ if $n_t=0$.
Similarly, we will write ${\overline{X}}_t$ for the value of the fund per
survivor after consumption. We note that $X_t=\lim_{h\nearrow t} \overline{X}_h$ at time points $t \in {\cal T}$. At intermediate
times, $t \in [i\delta t, (i+1)\delta t)$, ${\overline{X}}_t$
obeys the SDE
\[
\ed \overline{X}_t = \overline{X}_t( a_t \mu + (1-a_t) r) \, \ed t
+ \overline{X}_t a_t \sigma \, \ed W_t,
\]
with initial condition given by the budget equation
\begin{equation}
\overline{X}_t = \begin{cases}
\frac{n_{t}}{n_{t+\delta t}} (X_t - \gamma_t) & n \text{ finite} \\
s_t^{-1} (X_t - \gamma_t) & n=\infty
\end{cases}
\label{eq:budgetEquationGeneral}
\end{equation}
unless $n_{t+\delta t}=0$ in which case we define $\overline{X}_t$ to be zero on $[t, t+\delta t)$. Note that this formula
is based on our convention that an individual who dies at a time $t$ still consumes at that time and the corresponding convention for $n_t$ which ensures $n_{t+\delta t}$ is ${\cal F}_t$ measurable.

Let $\tau^i$ denote the time of death of individual $i$.

For finite $n$ we let $(\Omega, {\cal F}, {\cal F}_t, \P)$ be the filtered probability space generated by
$W_t$ and the time of death variables $\tau^i$. We define $\tau=\tau^1$ to be the time of death
variable for one specific individual whose time of death is greater than or equal to $t_0$.

For $n=\infty$ we let $\tau$ be any random variable with the distribution $p_t \DT$.
We let $(\Omega,{\cal F}, {\cal F}_t,\P)$ be the filtered probability space generated by $W_t$ and $\tau$.

Let $\tilde{\cal A}(x,t_0)$ denote the space of admissible controls $(\gamma_t,a_t)$: that is ${\cal F}_t$-predictable processes such that
$0 \leq \gamma_t \leq X_t$ and with $X_{t_0}=x$. We define the value
function of our problem starting at time $t_0$ to be
\begin{equation}
v_n(x,t_0) = \sup_{(\gamma,a) \in \tilde{\cal A}(x,t_0)} Z_{t_0}(\gamma, \tau^1).
\label{eq:defValueFunction}
\end{equation}
where $Z_{t_0}$ is an Epstein--Zin utility function.

\section{The cases $n=1$ and $n=\infty$}
\label{sec:epsteinZinFormulae}

To highlight the key ideas we will consider
only the case when $n=1$ and $n=\infty$ in this section,
leaving the case of general $n$ until Section \ref{sec:epsteinZinFiniteCollective}.
We define $C$ to distinguish these cases as follows
\begin{equation}
C =
\begin{cases}
0 & n=1 \\
1 & n=\infty.
\end{cases}
\end{equation}

We write $z_t:=v_n(1,t)$ so
that the positive-homogeneity of $Z_t$ implies that
\begin{equation}
v_n(x,t)=x\, z_t.
\label{eq:zscaling}
\end{equation}
We note that we have not yet shown whether $z$ is finite, but equation \eqref{eq:zscaling} can still be interpreted for infinite values of $z$.

Let $c_t$ denote the consumption rate at time $t$, so $\gamma_t=c_t X_t$
for an individual who is still alive at time $t$.
The budget equation \eqref{eq:budgetEquationGeneral} 
can then be written as
\begin{equation}
\overline{X}_t = s_t^{-C} (1-c_t) X_t.
\label{eq:budgetEquation}
\end{equation}

We now let ${\cal A}_{t+{\delta t}}(x,c_t)$ be the set of random variables
$X_{t+\delta t}$ representing the value of our investments at time $t+\delta t$
that can be obtained by a continuous time trading strategy with
an initial budget given by \eqref{eq:budgetEquation} with $X_t=x$. Thus
${\cal A}_{t+\delta t}(x,c_t)$ is the set of admissible investment returns given the budget
and the consumption.

The Markovianity of Epstein--Zin preferences and the definition of $Z_t$ to compute allow us to apply the dynamic programming principle to compute
\begin{align}
v_n(x,t) &= \sup_{c_t \geq 0, X_{t+\delta t}\in {\cal A}_{t+\delta t}(x,c_t)}
[
(x\, c_t)^\rho
+ \beta \{s_t \E_t( v_n(X_{t+\delta t}, t+\delta t)^\alpha )\}^\frac{\rho}{\alpha}
]^\frac{1}{\rho} \nonumber \\
&= \sup_{c_t \geq 0, X_{t+\delta t}\in {\cal A}_{t+\delta t}(x,c_t)}
[
(x\, c_t)^\rho
+ \beta z_{t+\delta t}^\rho  \{ s_t \E_t( X_{t+\delta t}^\alpha )\}^\frac{\rho}{\alpha}
]^\frac{1}{\rho}.
\label{eq:valueFunction1}
\end{align}
The second line follows from the first by the positive homogeneity of Epstein--Zin utility \eqref{eq:zscaling}.

If $\alpha>0$, we may compute the value of 
\[
\sup_{X_t \in {\cal A}_{t+\delta t}(c_t)} \E_t( X_{t+\delta t}^\alpha )
\]
by solving the Merton problem for optimal investment over time
period $[t, t+ \delta t]$, with initial budget $\overline{X}_t$, no consumption,
and terminal utility function $u(x)=x^\alpha$.

We find
\begin{equation}
\sup_{X_t \in {\cal A}_{t+\delta t}(c_t)} \E_t( X_{t+\delta t}^\alpha ) = (\exp( \xi \, \delta t) \overline{X}_t)^\alpha
\label{eq:mertonOptimum}
\end{equation}
where
\begin{align}
\xi &= \sup_{a \in \R}[ a(\mu - r) + r - \frac{1}{2}a^2(1-\alpha)\sigma^2]
\nonumber \\
&= a^*(\mu - r) + r - \frac{1}{2}(a^*)^2(1-\alpha)\sigma^2, \quad \text{with } a^*:= \frac{\mu-r}{(1-\alpha) \sigma^2}
\label{eq:defastar}
\end{align}
For details see Merton's paper \cite{merton1969lifetime} or \cite{pham} equations (3.47) and (3.48).  Moreover the proportion invested in stock
is given by $a^*$ which is a constant determined entirely by $\alpha$ and the market.
In the case where $\alpha<0$ we must instead compute
\[
\inf_{X_t \in {\cal A}_{t+\delta t}(c_t)} \E_t( X_{t+\delta t}^\alpha ).
\]
However, apart from the change of a $\sup$ to an $\inf$ everything is algebraically identical, so the same formulae emerge.

Putting the value \eqref{eq:mertonOptimum} into our
expression \eqref{eq:valueFunction1} for the value function we obtain
\begin{align}
v_n(x,t) &= 
\sup_{c_t\geq 0}
[
(x\, c_t)^\rho
+z_{t+\delta t}^\rho \beta \, (s_t \exp( \alpha \xi \, \delta t) \overline{X}_t^\alpha)^\frac{\rho}{\alpha} ]^\frac{1}{\rho}
\nonumber \\
&= 
\sup_{c_t\geq 0}
[
(x\, c_t)^\rho
+\beta \, (z_{t+\delta t} \exp( \xi \, \delta t) s_t^{(\frac{1}{\alpha}-C)} (1-c_t) X_t)^{\rho} ]^\frac{1}{\rho}
\label{eq:valueFunction2}
\end{align}
where the last line follows from the budget equation \eqref{eq:budgetEquation}.
We define
\begin{equation}
\phi_t:=\beta^\frac{1}{\rho} \exp(\xi \delta t) s_t^{(\frac{1}{\alpha}-C)}
\quad \theta_t:=\phi_t z_{t+\delta t}
\label{eq:defphi}
\end{equation}
so that \eqref{eq:valueFunction2} may be written as
\begin{align}
z_t &= 
\sup_{c_t\geq 0}
[
(c_t)^\rho
+\theta_t^\rho (1-c_t)^\rho ]^\frac{1}{\rho}
\label{eq:valueFunction3}
\end{align}

Differentiating the expression in square brackets on the right-hand
side, we see that the supremum is achieved in equation \eqref{eq:valueFunction2}
when $c_t=c^*_t$, where $c^*_t$ satisfies
\[
\rho \, (c^*_t)^{\rho-1} - \rho \theta_t^\rho((1-c^*_t)^{\rho-1} = 0,
\]
or, if this yields a negative value for $c^*_t$, we should take $c^*_t=0$.
Simplifying we must have
\begin{equation}
\left( \frac{c^*_t}{1-c^*_t} \right)^{\rho-1} = \theta_t^\rho.
\label{eq:cStarRelation}
\end{equation}
So
\begin{equation}
c^*_t = 
(1+\theta_t^\frac{\rho}{1-\rho})^{-1}.
\label{eq:defcstar}
\end{equation}
This expression for $c^*_t$ is non-negative, so it always gives the argument for the supremum  in \eqref{eq:valueFunction2}. We obtain
\begin{align}
z_t
&=
[(c^*_t)^\rho
+  \theta_t^\rho (1-c^*_t)^{\rho} ]^\frac{1}{\rho} \nonumber \\
&=
c^*_t \left[ 1
+  \theta_t^\rho \left(\frac{1-c^*_t}{c^*_t}\right)^{\rho} \right]^\frac{1}{\rho} \nonumber \\
&=
c^*_t \left[ 1
+  \theta_t^\rho \left(\theta_t^\frac{\rho}{\rho-1} \right)^{-\rho}
\right]^\frac{1}{\rho}, \quad \text{by \eqref{eq:cStarRelation}}, \nonumber \\
&=
c^*_t \left[ 1
+  (\theta_t^{1-\frac{\rho}{\rho-1}})^{\rho}
\right]^\frac{1}{\rho} \nonumber \\
&=
c^*_t \left[ 1
+  \theta_t^\frac{\rho}{1-\rho}
\right]^\frac{1}{\rho} \nonumber \\
&=
(1 + \theta_t^\frac{\rho}{\rho-1})^{-1} ( 1
+  \theta_t^\frac{\rho}{1-\rho}
)^\frac{1}{\rho}, \quad \text{ by \eqref{eq:defcstar}}, \nonumber \\
&=
( 1 +  \theta_t^\frac{\rho}{1-\rho})^\frac{1-\rho}{\rho}. \nonumber
\end{align}
We conclude that
\begin{equation}
z_t^\frac{\rho}{1-\rho} = 1 + \theta_t^\frac{\rho}{1-\rho} = 1 + \phi_t^\frac{\rho}{1-\rho} z_{t+\delta t}^\frac{\rho}{1-\rho}
\label{eq:valueFunction4}
\end{equation}
where $\phi$ is given by equation \eqref{eq:defphi}. We observe also that equation \eqref{eq:defcstar} for the optimal consumption rate per survivor simplifies to
\begin{equation}
c^*_t = z_t^\frac{\rho}{\rho-1}.
\label{eq:cStarConclusion}
\end{equation}

We summarize our findings below.
\begin{theorem}
	[Optimal investment with Epstein--Zin preferences]	
	Suppose that an individual has probability $p_t \delta t$ of dying
	in the interval $[t, t+ \delta t]$. Individuals consume at fixed time points $i\, \delta t$. By the time $N \delta t$, death is certain. Between time points, we may trade in the Black--Scholes--Merton
	market \eqref{eq:blackScholesMerton}. Let $C=0$ if we are interested
	in optimizing the consumption of an individual and $C=1$ if we are interested in the collectivised problem. The utility of each individual
	is given by Epstein--Zin utility of the form \eqref{eq:defepsteinzin}.
	Then
	\begin{enumerate}[(i)]
		\item  The optimal proportion of stock investments is
		determined entirely by the monetary-risk-aversion $\alpha$
		and the market. In particular it is independent of time
		and wealth. It is given by the value $a^*$ given in equation
		\eqref{eq:defastar}.
		\item  The optimal Epstein--Zin utility is given by $x z_t$ where 	$z_t$ obeys the
		difference equation \eqref{eq:valueFunction4}. The value
		of $\phi_t$ is given in equation \eqref{eq:defphi}.
		The optimal Epstein--Zin utility may be computed recursively
		since $z_{N \delta t}=0^\frac{1}{\alpha}$.	
		\item  The consumption for each survivor at time $t$ is given by $\gamma_t = x_t c^*_t$ where $c^*_t$ is given by \eqref{eq:cStarConclusion}.		   
	\end{enumerate}
	\label{thm:epsteinZin}
\end{theorem}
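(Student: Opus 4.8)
The plan is to prove all three claims simultaneously by backward induction on the time grid, exploiting the recursive (Markovian) structure of the Epstein--Zin functional \eqref{eq:defepsteinzin} together with its positive homogeneity \eqref{eq:zscaling}, working uniformly in $C \in \{0,1\}$. The induction starts at the terminal time $N\delta t$, where death is certain and hence $Z_{N\delta t} = \epsilon^{1/\alpha}$, giving the base value $z_{N\delta t} = 0^{1/\alpha}$. The inductive hypothesis at time $t+\delta t$ is that $v_n(x, t+\delta t) = x\, z_{t+\delta t}$ with $z_{t+\delta t}$ given by the recursion \eqref{eq:valueFunction4}; I would then establish the same representation at time $t$.

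First I would invoke the dynamic programming principle, which is legitimate precisely because \eqref{eq:defepsteinzin} expresses $Z_t$ as a function of current consumption $\gamma_t$ and the conditional certainty-equivalent of $Z_{t+\delta t}$. Substituting the homogeneous scaling and the budget identity \eqref{eq:budgetEquation} yields \eqref{eq:valueFunction1}, in which the control splits into an investment choice (the law of $X_{t+\delta t}$ given the post-consumption wealth $\overline{X}_t$) and a consumption choice $c_t$. The key structural observation is that these decouple: once $c_t$ is fixed, $\overline{X}_t$ is determined, and the remaining inner problem is to optimize $\E_t(X_{t+\delta t}^\alpha)$ over admissible single-period returns. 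This is a one-period Merton problem with power terminal utility, whose solution \eqref{eq:mertonOptimum}--\eqref{eq:defastar} is a constant stock proportion $a^*$ independent of $\overline{X}_t$, hence of $c_t$, and whose optimal value scales as $\overline{X}_t^\alpha$ times a constant. This proves claim (i). I would take care over the sign of $\alpha$: since the certainty-equivalent raises $\E_t(X_{t+\delta t}^\alpha)$ to the power $\rho/\alpha$, maximizing the overall objective corresponds to maximizing $\E_t(X_{t+\delta t}^\alpha)$ when $\alpha>0$ but to minimizing it when $\alpha<0$, yet the Merton algebra and the optimizer $a^*$ are unchanged.

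With the investment optimum substituted in, the value function reduces to the scalar problem \eqref{eq:valueFunction3} in $c_t$, where $\theta_t = \phi_t z_{t+\delta t}$ collects all market, discount and survival factors. Differentiating the bracket and solving the first-order condition \eqref{eq:cStarRelation} gives the interior optimizer \eqref{eq:defcstar}, which is automatically non-negative so no boundary case arises. Back-substituting and simplifying the resulting expression for $z_t$ yields the recursion \eqref{eq:valueFunction4} and, in turn, the compact consumption formula \eqref{eq:cStarConclusion}, establishing claims (ii) and (iii) and closing the induction.

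I expect the main obstacle to be the rigorous justification of the decoupling and of the dynamic programming step rather than any of the algebra: one must confirm that the admissible controls in $\tilde{\cal A}(x,t_0)$ are genuinely of product form (a per-step consumption rate followed by an independent continuous-time investment strategy), that the supremum over single-period returns is attained by the Merton strategy within the admissible class, and that the $\R^{++}$ conventions correctly propagate the possibility of an infinite $z_t$ through the recursion without invalidating the homogeneity scaling \eqref{eq:zscaling}.
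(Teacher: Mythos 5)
Your proposal is correct and follows essentially the same route as the paper: dynamic programming plus positive homogeneity to reduce to \eqref{eq:valueFunction1}, decoupling of the single-period Merton problem (with the sup/inf distinction for the sign of $\alpha$) to get $a^*$, and then the scalar first-order condition in $c_t$ yielding \eqref{eq:cStarRelation}, \eqref{eq:defcstar}, the recursion \eqref{eq:valueFunction4} and the consumption formula \eqref{eq:cStarConclusion}. The paper presents this as a direct derivation culminating in the theorem rather than as an explicit backward induction, but that is only a difference of packaging, not of substance.
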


\section{Consumption over time}
\label{sec:consumption}

It is interesting to see how wealth and the consumption per individual vary over time.
\begin{theorem}
	Under the same conditions as \ref{thm:epsteinZin},
	the optimal fund value per survivor at time $t$, $X_t$, follows a log
	normal distribution. Write $\mu^X_t$ and $\sigma^X_t$ for
	the mean and standard deviation of $\log X_t$ so that
	\begin{equation}
	\log X_t \sim N( \mu^X_t, (\sigma^X_t)^2 ).
	\label{eq:distLogWealth}
	\end{equation}
	The standard deviation is given by
	\begin{equation}
	\sigma^X_t = \sigma a^* \sqrt{t}.
	\label{eq:sdLogWealth}
	\end{equation}
	where $a^*$ is given by \eqref{eq:defastar}.
	The mean satisfies the difference equation
	\begin{equation}
	\mu^X_{t+\delta t}=\mu^X_{t} + \log(s_t^{-C}) + 
	\log\left( 1 - z_t^\frac{\rho}{\rho-1} \right) + \tilde{\xi} \delta t, \qquad \mu^X_{0} = \log (x_0)
	\label{eq:meanLogWealth}
	\end{equation}
	where $z_t$ is given by \eqref{eq:valueFunction4} and where we define $\tilde{\xi}$ by the same formula
	used to define $\xi$ but with $\alpha$ set to zero, i.e.\
	\begin{equation}
	\tilde{\xi} := a^*(\mu - r) + r - \frac{1}{2}(a^*)^2 \sigma^2.
	\label{eq:defTildeXi}
	\end{equation}	
	The optimal consumption per survivor $\gamma_t$ at time $t$ is also
	log normally distributed with
	\begin{equation}
	\log \gamma_t \sim N( \tfrac{\rho}{\rho-1} \log(z_t) + \mu^X_t,
	(\sigma^X_t)^2 ).
	\label{eq:distConsumption}
	\end{equation}
	The mean of the log consumption per survivor satisfies the equation
	\begin{equation}
	\E( \log \gamma_{t+\delta t} \mid \gamma_t ) = 
	\log(\gamma_t) + \log(s_t^{-C}) + \frac{\rho}{1-\rho} \log(\phi_t) + \tilde{\xi} \delta t
	\label{eq:meanLogGammaDynamics}
	\end{equation}
	where $\phi_t$ is given by equation \eqref{eq:defphi}.
	\label{thm:epsteinZinConsumption}
\end{theorem}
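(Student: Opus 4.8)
The plan is to convert the optimal strategy of Theorem~\ref{thm:epsteinZin} into an explicit multiplicative recursion for the wealth $X_t$ and then read off all four distributional claims. Under the optimum we hold the constant stock proportion $a^*$ and consume at rate $c^*_t = z_t^{\rho/(\rho-1)}$, where $z_t$ solves the deterministic recursion \eqref{eq:valueFunction4}. First I would solve the post-consumption SDE for $\overline{X}_t$ over a single step $[t,t+\delta t]$ with $a_t \equiv a^*$. Since this is a geometric Brownian motion, $X_{t+\delta t} = \overline{X}_t \exp(\tilde{\xi}\,\delta t + a^*\sigma\,\Delta W_t)$ with $\Delta W_t := W_{t+\delta t}-W_t \sim N(0,\delta t)$, because the log-growth rate $m - \tfrac12 v^2$, with $m = a^*\mu + (1-a^*)r$ and $v = a^*\sigma$, is exactly the $\tilde{\xi}$ of \eqref{eq:defTildeXi}. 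Combining this with the budget equation \eqref{eq:budgetEquation} evaluated at $c_t = c^*_t$ gives
\[
X_{t+\delta t} = s_t^{-C}\bigl(1 - z_t^{\rho/(\rho-1)}\bigr)\,X_t\,\exp\bigl(\tilde{\xi}\,\delta t + a^*\sigma\,\Delta W_t\bigr).
\]

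Second, taking logarithms turns this into an additive recursion $\log X_{t+\delta t} = \log X_t + d_t + a^*\sigma\,\Delta W_t$, where $d_t := \log(s_t^{-C}) + \log(1 - z_t^{\rho/(\rho-1)}) + \tilde{\xi}\,\delta t$ is deterministic (in the two cases $C\in\{0,1\}$ the survival multiplier in \eqref{eq:budgetEquation} is deterministic, and $z_t$ is deterministic by construction). Since $\log X_0 = \log x_0$ is deterministic and the increments $a^*\sigma\,\Delta W_t$ are independent Gaussians, an induction shows $\log X_t$ is Gaussian, proving \eqref{eq:distLogWealth}. Taking expectations annihilates the mean-zero Brownian term and yields the mean recursion \eqref{eq:meanLogWealth}, while summing the independent increment variances $(a^*)^2\sigma^2\,\delta t$ up to time $t$ gives $(\sigma^X_t)^2 = (a^*)^2\sigma^2 t$, i.e.\ \eqref{eq:sdLogWealth}.

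Third, for consumption I would use $\gamma_t = c^*_t X_t = z_t^{\rho/(\rho-1)} X_t$, so that $\log\gamma_t = \tfrac{\rho}{\rho-1}\log z_t + \log X_t$ is a deterministic shift of $\log X_t$; the distribution \eqref{eq:distConsumption} is then immediate. For the conditional mean dynamics \eqref{eq:meanLogGammaDynamics} I would substitute $\log X_t = \log\gamma_t - \tfrac{\rho}{\rho-1}\log z_t$ into the log-wealth recursion and condition on $\gamma_t$ (equivalently on $X_t$), using that $\Delta W_t$ is independent of the past. This reduces the claim to the algebraic identity
\[
\log\bigl(1 - z_t^{\rho/(\rho-1)}\bigr) + \tfrac{\rho}{\rho-1}\bigl(\log z_{t+\delta t} - \log z_t\bigr) = \tfrac{\rho}{1-\rho}\log\phi_t,
\]
which I would establish by writing $1 - c^*_t = \theta_t^{\rho/(1-\rho)} z_t^{-\rho/(1-\rho)}$ (a consequence of \eqref{eq:valueFunction4} and \eqref{eq:cStarConclusion}) and then expanding $\theta_t = \phi_t z_{t+\delta t}$ via \eqref{eq:defphi}, after which the $\log z_{t+\delta t}$ and $\log z_t$ terms cancel against the shift coming from $\tfrac{\rho}{\rho-1}\log z$.

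The computation is largely mechanical once the recursion is in place; the one point demanding care is the appearance of $\tilde{\xi}$ rather than $\xi$. The quantity $\xi$ of \eqref{eq:defastar} arose from optimizing the \emph{expected utility} $\E_t(X_{t+\delta t}^\alpha)$ and carries the risk-aversion factor $(1-\alpha)$ in its variance term, whereas the realized dynamics of $\log X_t$ are governed by the true physical log-drift $\tilde{\xi} = a^*(\mu-r) + r - \tfrac12(a^*)^2\sigma^2$, obtained by setting $\alpha = 0$. The optimal proportion $a^*$ is common to both, but only the exponent changes, and I would flag this distinction explicitly so that the $\tilde{\xi}$ appearing in \eqref{eq:meanLogWealth} and \eqref{eq:meanLogGammaDynamics} is not conflated with $\xi$.
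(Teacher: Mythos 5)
Your proposal is correct and takes essentially the same route as the paper's own proof: the one-step geometric Brownian motion solution combined with the budget equation \eqref{eq:budgetEquation} at $c_t = c^*_t$ gives the additive recursion for $\log X_t$, induction yields the Gaussian law with the stated mean and variance recursions, the consumption law follows as a deterministic shift via \eqref{eq:cStarConclusion}, and your closing algebraic identity is exactly the cancellation the paper performs using \eqref{eq:valueFunction4}. The only cosmetic differences are that you solve the SDE explicitly rather than applying It\^o's lemma to $\log X_t$, and that you isolate the final identity as a separate claim rather than carrying it through a chain of equalities.
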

\begin{proof}
	We suppose as induction hypothesis that the
	distribution of $X_t$ is as described at time $t$.
	
	The budget equation \eqref{eq:budgetEquation}
	then tells us that the wealth per survivor after consumption, ${\overline{X}}_t$, satisfies
	\[
	\overline{X}_t = s^{-C}_t (1- z_t^{\frac{\rho}{\rho-1}}) X_t.
	\]
	Hence
	\[
	\log \overline{X}_t \sim
	N( \mu^X_t + \log(s^{-C}_t (1- z_t^{\frac{\rho}{\rho-1}})),
	(\sigma^X_t)^2 ).
	\]
	Our investment strategy from $t$ to $(t+\delta t)$ is a continuous time trading strategy where we hold a fixed proportion of our wealth in stocks at all time. So, in the interval $(t,t+\delta t]$, $X_t$ satisfies the SDE
	\[
	\ed X_t = (1-a^*) r X_t \, \ed t + a^* X_t (\mu \, \ed t + \sigma \, \ed W_t), \qquad X_t=\overline{X}_t.
	\]
	By It\^o's lemma we find
	\begin{align}
	\ed (\log X)_t &= (1-a^*) r \, \ed t + a^* ((\mu-\tfrac{1}{2}a^* \sigma^2) \, \ed t + \sigma \, \ed W_t), \qquad \log X_t = \log \overline{X}_t \nonumber \\
	&= \tilde{\xi} \ed t  + a^* \sigma \, \ed W_t.
	\end{align}
	We deduce that $(\log X_{t+\delta t}-\log \overline{X}_t)$ 
	conditioned on the value of $X_t$ will follow a normal distribution with
	mean $\tilde{\xi} \, \delta t$ and standard deviation $a^* \sigma \sqrt{\delta t}$.
	Moreover the random variable $(\log X_{t+\delta t}-\log \overline{X}_t)$
	is independent of $\overline{X}_t$. 
	
	The sum of independent normally distributed random increments yields a new
	normally distributed random variable, and one can compute the mean and variance by adding the mean and variance of the increments. Hence
	\[
	\log X_{t+\delta t}
	\sim N( \mu^X_{t+\delta t}, (\sigma^X_{t+\delta t})^2 )
	\]
	where
	\begin{equation}
	\mu^X_{t+\delta t} = 
	{\mu}^X_t
	+ \log(s^{-C}_t (1- z_t^{\frac{\rho}{\rho-1}}))
	+ \tilde{\xi} \, \delta t
	\label{eq:muXRecursion}
	\end{equation}
	and
	\begin{equation}
	(\sigma^X_{t+\delta t})^2 = (\sigma^X_t)^2 + (a^*)^2 \sigma^2 \delta t  .
	\label{eq:sigmaXRecursion}
	\end{equation}
	Solving the recursion \eqref{eq:sigmaXRecursion}
	yields equation \eqref{eq:sdLogWealth}.
	The result for $X_t$ now follows by induction.
	
	Equation \eqref{eq:distConsumption} follows from equation
	\eqref{eq:cStarConclusion}. Using \eqref{eq:distConsumption}, \eqref{eq:meanLogWealth}
	we calculate
	\begin{align*}
	\E( \log&\gamma_{t+\delta t} \mid \gamma_t) - \log( \gamma_t )  \\
	&=
	\frac{\rho}{\rho-1} \left( \log(z_{t+\delta t})
	- \log(z_t) \right)
	+ \log( s_t^{-C}) + \log( 1 - z_t^\frac{\rho}{\rho-1}) + \tilde{\xi} \delta t
	\\
	&=
	\frac{\rho}{1-\rho}\left(
	\log(z_t) -\log(z_{t+\delta t})  \right)
	+ \log( s_t^{-C}) + \log\left( \frac{z_t^\frac{\rho}{1-\rho} - 1}{z_t^\frac{\rho}{1-\rho}} \right)
	+ \tilde{\xi} \delta t \\
	&= \log( s_t^{-C} ) +
	\log\left(  \phi_t^{\frac{\rho}{1-\rho}}\right) + \tilde{\xi} \delta t
	\qquad \text{ by equation \eqref{eq:valueFunction4}.}
	\end{align*}
	This completes the proof.
\end{proof}

To interpret Theorem \ref{thm:epsteinZinConsumption} we specialize to the case of a market where $\mu=r=0$ and 
to preferences with $\beta=1$. This represents the problem of consuming a
fixed lump sum over time when there is no inflation but also no investment opportunities. While not financially reasonable, this problem highlights
how longevity risk affects consumption, when considered in isolation from market risk.

In this case $\gamma_t$ is a deterministic function. We find from equations \eqref{eq:meanLogGammaDynamics}
and \eqref{eq:defphi} that
\begin{equation}
\gamma_{t+\delta t}=
\left(s_t^{\frac{1}{\alpha }-\frac{C}{\rho }}\right)^{\frac{\rho }{1-\rho }} \gamma_t.
\label{eq:gammaDynamics}
\end{equation}
We note that $s_t$ is a non-zero probability, so $0 < s_t \leq 1$. We may use equation \eqref{eq:gammaDynamics} to compute whether $\gamma$ increases or decreasing over time. We summarize the results in Table \ref{table:direction}.

Perhaps surprisingly, we find that sometimes consumption
increases over time rather than decreases. In the collectivised case, this can be explained by the fact that the pension will always be inadequate when $\alpha<0$. We say that a pension is inadequate
if living an extra year on that pension decreases utility. If $\alpha<0$, living longer is
considered negative by homogeneous Epstein--Zin preferences, so we may wish to compensate individuals who have the misfortune to live longer. We cannot identify these individuals in advance, so the only way to provide this compensation is to increase consumption over time. Thus the increasing consumption arises from the fact that when $\alpha<0$, the primary risk is the inadequacy of the pension, when $\alpha>0$, the primary risk is dying young. We believe that this mixing of the notion of pension adequacy with the risk aversion parameter is a significant shortcoming of homogeneous Epstein--Zin preferences. It is the price one must pay for analytic tractability.

In the individual case, the concern that one will die young is much more serious. This is why for the individual problem, the fear of an inadequate pension only dominates when both $\alpha<0$ and $0<\rho<1$.

The case when $\alpha=\rho$ corresponds to the case of standard
von-Neumann Morgernstern preferences, in which case constant consumption
is optimal in the collectivised case as was proved in \cite{ab-main}.

More significantly, our result also shows the converse. Constant consumption from one period to the next is only optimal if and only if either (i) the survival probability is one, or (ii) $\alpha=\rho$ so one is satisfaction risk-neutral. Hence, even ignoring market effects constant consumption will be suboptimal for any realistic parameter choices.

We have not shown the case $\alpha>\rho$ in Table \ref{table:direction} as in this case one has monetary-risk-aversion but not satisfaction-risk-aversion. We found the resulting behaviour to be difficult to interpret as rational, cautious (as understood intuitively) strategies. We see this simply as a evidence that satisfaction-risk-aversion is the correct operationalization of the intuitive notion of risk-aversion.

\begin{table}
	\begin{center}
		\begin{tabular}{ccc} \toprule
			& Collectivised & Individual \\ \midrule
			$\alpha<\rho<0<1$ & Increasing & Decreasing \\
			$\alpha<0<\rho<1$ & Increasing & Increasing \\ 
			$0<\alpha<\rho<1$ & Decreasing & Decreasing \\ 
			$\alpha=\rho<0<1$ & Constant & Decreasing \\
			$0<\alpha=\rho<1$ & Constant & Decreasing \\ \bottomrule
		\end{tabular}
		\caption{The behaviour of consumption with time when $\mu=r=0$, $\beta=1$ and $\alpha\leq\rho$.}
	\end{center}
	\label{table:direction}
\end{table}

\medskip

It is also interesting to calculate how consumption changes 
according to the available investment opportunities.
If interest rates increase one may choose to defer consumption to
a later date to benefit from the increased rate.
To quantify this behaviour one wishes to calculate
the elasticity of intertemporal substitution which
is defined as follows.

\begin{definition}
	The {\em elasticity of intertemporal substitution} at time $t$ is defined to be
	\[
	\EIS_t:= \frac{1}{\delta t}
	\frac{\ed}{\ed r} \left( \E( \log( \gamma_{t+1}) )
	- \log( \gamma_{t}) \right).
	\]
	When $\gamma_t$ is deterministic, this definition corresponds
	with the standard definition \cite{hall}.
\end{definition}

Theorem \ref{thm:epsteinZinConsumption} allows us
to calculate this elasticity.
\begin{corollary}
	\label{corollary:eis}	
	For the optimal investment strategy of Theorem \eqref{thm:epsteinZin}
	we have
	\[
	\EIS_t = \frac{1}{1 - \rho} \left( 1 - \frac{(\mu - r) (1 + \alpha (\rho - 2))}{(\alpha - 1)^2 \sigma^2}
	\right).
	\]
	If $\mu=r$ this simplifies to
	\[
	\EIS_t = \frac{1}{1 - \rho}.
	\]
	In the case of von Neumann-Morgernstern utility we have
	\[
	\EIS_t = \frac{1}{1 - \rho} \left( 1 - \frac{\mu - r}{\sigma^2} \right).
	\]
\end{corollary}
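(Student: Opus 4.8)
The plan is to read off the expected one-step increment in log-consumption directly from Theorem~\ref{thm:epsteinZinConsumption} and then differentiate in $r$. Subtracting $\log\gamma_t$ from equation~\eqref{eq:meanLogGammaDynamics} gives
\[
\E(\log\gamma_{t+\delta t}\mid\gamma_t) - \log\gamma_t = \log(s_t^{-C}) + \frac{\rho}{1-\rho}\log\phi_t + \tilde\xi\,\delta t,
\]
and the right-hand side is deterministic, so it is unaffected by the outer expectation in the definition of $\EIS_t$. Inserting the definition \eqref{eq:defphi} of $\phi_t$, namely $\log\phi_t = \frac{1}{\rho}\log\beta + \xi\,\delta t + (\frac{1}{\alpha}-C)\log s_t$, we see that the only $r$-dependent terms are $\xi$ (appearing inside $\phi_t$) and $\tilde\xi$; the survival factor $s_t$ and the discount factor $\beta$ do not depend on $r$. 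Hence the factors of $\delta t$ cancel against the $\frac{1}{\delta t}$ in the definition and
\[
\EIS_t = \frac{\rho}{1-\rho}\frac{\ed\xi}{\ed r} + \frac{\ed\tilde\xi}{\ed r}.
\]

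Next I would compute the two derivatives. For $\xi$ I would appeal to the envelope theorem: since $\xi$ is the supremum over $a$ of $a(\mu-r)+r-\frac12 a^2(1-\alpha)\sigma^2$ attained at $a=a^*$, only the explicit $r$-dependence contributes and $\frac{\ed\xi}{\ed r} = 1 - a^*$. The quantity $\tilde\xi$ in \eqref{eq:defTildeXi} is more delicate: it is evaluated at the same $a^*$ but is \emph{not} itself being maximised (its own optimiser would use $\alpha=0$), so the envelope theorem does not apply and I must differentiate through the $r$-dependence of $a^* = (\mu-r)/((1-\alpha)\sigma^2)$, using $\ed a^*/\ed r = -1/((1-\alpha)\sigma^2)$. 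A direct chain-rule computation then gives $\frac{\ed\tilde\xi}{\ed r} = 1 - 2a^* + \frac{a^*}{1-\alpha}$.

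Substituting these into the expression for $\EIS_t$ and placing everything over the common denominator $1-\rho$ reduces the numerator to $1 - a^*\bigl(2-\rho-\frac{1-\rho}{1-\alpha}\bigr)$; collecting the bracket over $1-\alpha$ turns $2-\rho-\frac{1-\rho}{1-\alpha}$ into $(1+\alpha(\rho-2))/(1-\alpha)$, and substituting $a^*=(\mu-r)/((1-\alpha)\sigma^2)$ yields the stated formula after using $(1-\alpha)^2=(\alpha-1)^2$. The two special cases then follow by inspection: setting $\mu=r$ forces $a^*=0$ and leaves $\frac{1}{1-\rho}$, while the von Neumann--Morgenstern case $\alpha=\rho$ uses $1+\rho(\rho-2)=(1-\rho)^2$ to cancel the squared factor and produce $\frac{1}{1-\rho}(1-(\mu-r)/\sigma^2)$. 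The only real obstacle is keeping the asymmetric treatment of $\xi$ and $\tilde\xi$ straight --- applying the envelope theorem to the former but a full chain rule to the latter --- since mistakenly applying the envelope theorem to $\tilde\xi$ as well would drop the $\frac{a^*}{1-\alpha}$ term and give the wrong answer.
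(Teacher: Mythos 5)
Your proposal is correct and takes essentially the same route as the paper: read the expected one-step log-consumption increment from \eqref{eq:meanLogGammaDynamics} and \eqref{eq:defphi}, observe that only $\xi$ and $\tilde\xi$ depend on $r$, and then carry out the calculus with \eqref{eq:defastar} and \eqref{eq:defTildeXi} --- including the key point that the envelope theorem applies to $\xi$ but not to $\tilde\xi$. In fact your sign is the right one: the paper's own intermediate display reads $\EIS_t = \frac{\ed}{\ed r}\left(\frac{\rho}{1-\rho}\xi - \tilde\xi\right)$, which is a typo for $\frac{\ed}{\ed r}\left(\frac{\rho}{1-\rho}\xi + \tilde\xi\right)$, since only the plus sign reproduces the stated value $\frac{1}{1-\rho}$ when $\mu = r$ (the minus sign would give $\frac{2\rho-1}{1-\rho}$).
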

\begin{proof}
	From  \eqref{eq:meanLogGammaDynamics} and \eqref{eq:defphi} we immediately
	find
	\[
	\EIS_t = \frac{\ed }{\ed r} \left(\frac{\rho}{1-\rho} \xi - \tilde{\xi} \right).
	\]
	The result is now a simple calculation from \eqref{eq:defastar} and \eqref{eq:defTildeXi}.
\end{proof}

\section{General finite collectives}
\label{sec:epsteinZinFiniteCollective}

Let $v_n$ be the value function \eqref{eq:defValueFunction} for
the optimal investment problem for $n$ individuals with
homogeneous Epstein--Zin preferences. By positive homogeneity
we may define $z_{n,t}:=v_n(1,t)$, so that $v_n(x,t)=x z_{n,t}$.

Let $I_{i,t_0+\delta t}$ denote the event that both
\begin{enumerate}[(i)]
	\item there are $i$ survivors at time $t_0+\delta t$, i.e.\ $n_{t+\delta t}=i$.
	\item the individual whose utility we wish to calculate is one of those survivors, so $\tau^\iota>t$.
\end{enumerate}
Recall that $X_{t}$ denotes the value of the fund per survivor at time $t$ before consumption
and mortality.
\[
\E_{t_0}(Z_{t_0+\delta t}^\alpha \mid I_{i,{t_0+\delta t}}) = 
\E_{t_0}( v_i( X_{t_0+\delta t}, t_0 + \delta t)^\alpha \mid I_{i,t_0+\delta t}).
\]
Hence
\[
\E_{t_0}(Z_{t_0+\delta t}^\alpha) = \sum_{i=1}^{n_t} \frac{i}{n_t} S_t(n_t,i) \E_{t_0}(v_i( X_{t_0+\delta t}, t_0 + \delta t)^\alpha \mid I_{i,t_0+\delta t}).
\]
We now let ${\cal A}_{t+{\delta t}}(x,c_t)$ be the set of random variables
$X_{t+\delta t}$ representing the value of the fund per survivor at time $t+\delta t$
before consumption that can be obtained by a continuous time trading strategy given initial capital $\overline{X}_t = \frac{n_{t+\delta t}}{n_t} (1-c_t) X_t$ per survivor when $X_t=x$. Writing $c_t$ for the rate of consumption and using the dynamic programming principle we find
\begin{align}
v_n(x,t) &= \sup_{\stackrel{c_t \geq 0}{X_{t+\delta t}\in {\cal A}_{t+\delta t}(x,c_t)}}
\left[
(x\, c_t)^\rho
+ \beta \left\{\sum_{i=1}^n  \frac{i}{n} S_t(n,i) \E_t( v_i(X_{t+\delta t}, t+\delta t)^\alpha \mid I_{i,t_0+\delta t} ) \right\}^\frac{\rho}{\alpha}
\right]^\frac{1}{\rho} \nonumber \\
&= \sup_{\stackrel{c_t \geq 0}{X_{t+\delta t}\in {\cal A}_{t+\delta t}(x,c_t)}}
\left[
(x\, c_t)^\rho
+ \beta \left\{ \left(\sum_{i=1}^n \frac{i}{n} S_t(n,i) z_{i,{t+\delta t}}^\alpha \E_t(  X_{t+\delta t}^\alpha \mid I_{i,t_0+\delta t} ) \right)  \right\}^\frac{\rho}{\alpha}
\right]^\frac{1}{\rho}. \nonumber \\
\end{align}
We used positive homogeneity to obtain the last line. The argument of Section \ref{sec:epsteinZinFormulae} tells us how to optimize over $X_t$.
Hence we find
\begin{equation*}
z_{n,t} =
\sup_{c_t \geq 0}
\left[
(c_t)^\rho
+ \beta \left(\sum_{i=1}^n \left( \frac{i}{n} \right)^{1-\alpha} S_t(n,i) z_{i,t+\delta t}^{\alpha} \right)^\frac{\rho}{\alpha} \left(\exp( \xi \, \delta t) (1-c_t)\right)^\rho  
\right]^\frac{1}{\rho}
\end{equation*}
where $\xi$ is as defined in equation \eqref{eq:defastar}. The optimal investment policy is also described in equation \eqref{eq:defastar}, and as before it depends only on the market and the monetary-risk-aversion parameter $\alpha$.

We may rewrite our expression for $z_{n,t}$ as follows:
\begin{equation}
z_{n,t} =
\sup_{c_t \geq 0}
\left[
(c_t)^\rho
+ \tilde{\theta}_{n,t}^\rho (1-c_t)^\rho  
\right]^\frac{1}{\rho}
\label{eq:valueFunction3Tilde}
\end{equation}
where
\begin{equation}
\tilde{\theta}_{n,t} =  \beta^\frac{1}{\rho} \exp( \xi \, \delta t) \left(
\sum_{i=1}^n \left( \frac{i}{n} \right)^{1-\alpha} S_t(n,i) z_{i,t+\delta t}^{\alpha}
\right)^\frac{1}{\alpha}.
\label{eq:deftildetheta}
\end{equation}
Equation \eqref{eq:valueFunction3Tilde} is structurally identical to equation \eqref{eq:valueFunction3}. Hence from equation
\eqref{eq:valueFunction4} we may deduce similarly that
\begin{equation}
z_{n,t}^\frac{\rho}{1-\rho} = 1 + \tilde{\theta}_{n,t}^\frac{\rho}{1-\rho}.
\label{eq:valueFunction4Tilde}
\end{equation}

We state our results as a theorem.
\begin{theorem}
	Let $z_{n,t}$ denote the optimal Epstein--Zin utility, \eqref{eq:defValueFunction}, for a collective of $n$ individuals investing an amount $1$ at time $t$. The collective is
	allowed to invest in the Black--Scholes--Merton market \eqref{eq:blackScholesMerton} in continuous time. Individuals have independent mortality, with survival probability given by \eqref{eq:defs}.
	Then equations \eqref{eq:valueFunction4Tilde} and \eqref{eq:deftildetheta}
	together with the initial condition $z_{n,T}=0^\frac{1}{\alpha}$ allow us
	to compute the optimal Epstein--Zin utility by recursion. The value
	of $\xi$ is given in equation \eqref{eq:defastar} and the value
	of $S(n,i)$ is given in equation \eqref{eq:defS}
\end{theorem}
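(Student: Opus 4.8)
The plan is to establish the stated recursion by a backward dynamic-programming argument in which the number of survivors is carried as an extra state variable, and then to reduce the resulting per-step optimisation to the one already solved in Section~\ref{sec:epsteinZinFormulae}. The engine is the recursive (Markovian) structure of Epstein--Zin utility \eqref{eq:defepsteinzin}: conditioning on ${\cal F}_t$ and invoking the dynamic programming principle rewrites $v_n(x,t)$ as a supremum, over the current consumption rate $c_t$ and the admissible terminal wealths $X_{t+\delta t}\in{\cal A}_{t+\delta t}(x,c_t)$, of the one-step aggregator applied to $\E_t(Z_{t+\delta t}^\alpha)$. Since the consumption horizon is finite, this becomes a finite backward induction, so the whole argument reduces to analysing a single step and checking the terminal condition.

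The first substantive step is to evaluate $\E_t(Z_{t+\delta t}^\alpha)$ by partitioning on the number of survivors at the next grid point. Whenever the representative individual has died, the contribution is $\epsilon^{1/\alpha}$, hence $\epsilon$ after raising to the $\alpha$-th power, which is absorbed by the algebra of $\R^{++}$; only survival events contribute. Conditional on there being $i$ survivors, exchangeability gives probability $i/n$ that our individual is one of them, so the effective weight is $\tfrac{i}{n}S_t(n,i)$ with $S_t(n,i)$ the binomial transition \eqref{eq:defS}. On that event the continuation value is $v_i(X_{t+\delta t},t+\delta t)$, and positive homogeneity \eqref{eq:zscaling} turns it into $z_{i,t+\delta t}^\alpha X_{t+\delta t}^\alpha$, so the continuation term factorises into a combinatorial sum and an expectation of $X_{t+\delta t}^\alpha$.

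Next I would solve the inner investment problem exactly as in Section~\ref{sec:epsteinZinFormulae}: over one step with no intermediate consumption and terminal criterion $x\mapsto x^\alpha$, the Merton optimum \eqref{eq:mertonOptimum} gives $\sup\E_t(X_{t+\delta t}^\alpha)=(\exp(\xi\,\delta t)\,\overline{X}_t)^\alpha$ with optimal stock proportion $a^*$, independent of $i$. Substituting the budget equation, whose mortality credit rescales the post-consumption wealth per survivor by $n_t/n_{t+\delta t}=n/i$, the weight $\tfrac{i}{n}$ combines with the factor $(n/i)^\alpha$ to produce precisely the exponent $(i/n)^{1-\alpha}$ of \eqref{eq:deftildetheta}. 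After this reduction the remaining optimisation over $c_t$ is \eqref{eq:valueFunction3Tilde}, which is term-for-term identical to \eqref{eq:valueFunction3} with $\theta_t$ replaced by $\tilde{\theta}_{n,t}$; the algebra of Section~\ref{sec:epsteinZinFormulae} leading to \eqref{eq:valueFunction4} then yields \eqref{eq:valueFunction4Tilde} verbatim.

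Finally I would close the induction. The base case is the terminal condition: at $t=T$ death is certain, so $Z_T=\epsilon^{1/\alpha}$ and thus $z_{n,T}=0^{1/\alpha}$ in the $\R^{++}$ convention. Because the one-step recursion for $z_{n,t}$ involves $z_{i,t+\delta t}$ for every $1\le i\le n$, the backward recursion must be run simultaneously over all fund sizes $i\in\{1,\dots,n\}$ at each time level, which is legitimate since every continuation value needed lives exactly one step in the future. I expect the main obstacle to be the combinatorial bookkeeping of the survivor partition --- correctly pairing the mortality-credit scaling with the survival weight to obtain $(i/n)^{1-\alpha}$, and confirming that dead individuals drop out under $\R^{++}$ --- rather than any analytic difficulty; the accompanying subtlety is the verification that the recursively defined $z_{n,t}$ really is the supremum in \eqref{eq:defValueFunction} and not merely a bound, which the finite horizon together with the monotonicity and continuity of the aggregator \eqref{eq:defepsteinzin} in its continuation argument should secure.
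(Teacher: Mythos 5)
Your proposal is correct and follows essentially the same route as the paper's own argument: dynamic programming with the survivor count as an extra state, partitioning $\E_t(Z_{t+\delta t}^\alpha)$ over the events ``$i$ survivors including our individual'' with weights $\tfrac{i}{n}S_t(n,i)$, invoking positive homogeneity and the one-step Merton solution \eqref{eq:mertonOptimum}, and combining the mortality-credit factor $(n/i)^\alpha$ with the weight $\tfrac{i}{n}$ to obtain the exponent $(i/n)^{1-\alpha}$ in \eqref{eq:deftildetheta}, after which the scalar optimisation over $c_t$ reduces verbatim to \eqref{eq:valueFunction3} and yields \eqref{eq:valueFunction4Tilde}. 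Your closing observations --- that the recursion must be run jointly over all fund sizes $i\le n$ and that the dead contribute only absorbed $\epsilon$ terms in $\R^{++}$ --- are consistent with (and slightly more explicit than) the paper's treatment.
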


It is reassuring to check that $\tilde{\theta}_{1,t}$ in equation \eqref{eq:deftildetheta} coincides with the value of  $\theta_t$ for the individual problem given by equation \eqref{eq:defphi}.

\section{Convergence of $v_n$ as $n\to \infty$}
\label{sec:convergence}

In this section we will analyse the behaviour as $n \to \infty$ to prove the following.

\begin{theorem}
	\label{thm:cvgcEpsteinZin}	
	Let $z_{\infty,t}$ denote the maximum Epstein--Zin utility at time $t$
	for the infinitely collectivized case then
	\[
	z_{n,t} = z_{\infty,t} + O(n^{-\frac{1}{2}}).
	\]
\end{theorem}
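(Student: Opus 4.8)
The plan is to prove the estimate by backward induction along the finite time grid $\cal T$, propagating a uniform bound of the form $|z_{i,t}-z_{\infty,t}|\le K_t\, i^{-1/2}$, valid for all $i\ge 1$, from $t+\delta t$ back to $t$. The base case is clean: at the final step both $z_{n,T}$ and $z_{\infty,T}$ equal $0^{1/\alpha}$, so the recursion \eqref{eq:valueFunction4Tilde} and its $n=\infty$ analogue give $z_{n,T-\delta t}=z_{\infty,T-\delta t}=1$ exactly, with zero error. Since the grid is finite, it then suffices to show that a single inductive step inflates the constant $K$ by at most a bounded factor, so that all the $K_t$ stay finite.

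The heart of the argument is a single estimate on the weighted sum appearing in \eqref{eq:deftildetheta}. Writing $N:=n_{t+\delta t}$, which given $n_t=n$ is $\mathrm{Bin}(n,s_t)$-distributed, that sum equals $A_{n,t}:=\E[(N/n)^{1-\alpha}z_{N,t+\delta t}^\alpha]$, where the $i=0$ term vanishes because $1-\alpha>0$. I would split it as
\[
A_{n,t}= z_{\infty,t+\delta t}^\alpha\,\E\!\left[(N/n)^{1-\alpha}\right] + \E\!\left[(N/n)^{1-\alpha}\bigl(z_{N,t+\delta t}^\alpha - z_{\infty,t+\delta t}^\alpha\bigr)\right].
\]
For the first term, $x\mapsto x^{1-\alpha}$ is Lipschitz near $s_t>0$ and bounded on $[0,1]$, while $N/n$ concentrates at $s_t$ with $\E|N/n-s_t|=O(n^{-1/2})$ and exponentially small tails (Chernoff), giving $\E[(N/n)^{1-\alpha}]=s_t^{1-\alpha}+O(n^{-1/2})$. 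For the second term I would use $(N/n)^{1-\alpha}\le 1$ together with the induction hypothesis and the smoothness of $x\mapsto x^\alpha$ away from $0$ to bound $|z_{N,t+\delta t}^\alpha-z_{\infty,t+\delta t}^\alpha|\le K\,N^{-1/2}$, so that this term is at most $K\,\E[N^{-1/2}\mathbf{1}_{\{N\ge 1\}}]$. The technical lemma to isolate is $\E[N^{-1/2}\mathbf{1}_{\{N\ge 1\}}]=O(n^{-1/2})$ for $N\sim\mathrm{Bin}(n,s_t)$, proved by comparing with $(ns_t)^{-1/2}$ on the bulk $\{N\ge ns_t/2\}$ and controlling the exponentially small event $\{N<ns_t/2\}$ crudely. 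Combining the two pieces gives $A_{n,t}=z_{\infty,t+\delta t}^\alpha\, s_t^{1-\alpha}+O(n^{-1/2})$.

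It remains to transport this error through the remaining algebra. Since $z_{\infty,t+\delta t}$ is bounded away from $0$ and $\infty$ on the finite grid (the recursion \eqref{eq:valueFunction4Tilde} forces $z^{\rho/(1-\rho)}\ge 1$, and a matching bound on the other side follows by induction because $\phi_t$ is bounded), the limit $A_\infty:=z_{\infty,t+\delta t}^\alpha s_t^{1-\alpha}$ is a fixed positive number, and the maps $x\mapsto x^{1/\alpha}$, $x\mapsto 1+x^{\rho/(1-\rho)}$ and $x\mapsto x^{(1-\rho)/\rho}$ are all smooth at the relevant positive arguments. Hence from \eqref{eq:deftildetheta} we obtain $\tilde{\theta}_{n,t}=\theta_{\infty,t}+O(n^{-1/2})$, where $\theta_{\infty,t}=\phi_t z_{\infty,t+\delta t}$ recovers the $C=1$ case of \eqref{eq:defphi}, and then from \eqref{eq:valueFunction4Tilde} that $z_{n,t}^{\rho/(1-\rho)}=z_{\infty,t}^{\rho/(1-\rho)}+O(n^{-1/2})$, whence $z_{n,t}=z_{\infty,t}+O(n^{-1/2})$ with a constant $K_t$ that is a bounded multiple of $K_{t+\delta t}$. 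The main obstacle is exactly the key estimate of the second paragraph, namely controlling the propagated $O(N^{-1/2})$ error under the binomial weights, including the small-$N$ tail and the non-smoothness of $x^{1-\alpha}$ at the origin when $0<\alpha<1$; once that expectation bound is in hand, the rest is routine bookkeeping over finitely many grid steps.
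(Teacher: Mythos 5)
Your proposal is correct, and it reaches the paper's key intermediate conclusion $\tilde{\theta}_{n,t}=\theta_t+O(n^{-1/2})$ by a genuinely different route. The overall skeleton is the same and essentially forced by the problem: backward induction over the finite grid, with the whole difficulty concentrated in the weighted binomial sum in \eqref{eq:deftildetheta}, followed by routine transport of the error through the recursions \eqref{eq:valueFunction4} and \eqref{eq:valueFunction4Tilde}. Where you differ is in how that sum is estimated. The paper splits it into a left tail, a right tail and a central region, kills the tails via explicit exponential-decay estimates on ratios of binomial probabilities, and then treats the central region with a quantitative (Berry--Esseen type) Central Limit Theorem bound from Bhattacharya followed by Laplace's method to evaluate the resulting Gaussian integral. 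You instead view the sum as $\E\bigl[(N/n)^{1-\alpha}z_{N,t+\delta t}^{\alpha}\mathbf{1}_{\{N\ge 1\}}\bigr]$ for $N\sim\mathrm{Bin}(n,s_t)$, peel off $z_{\infty,t+\delta t}^{\alpha}\,\E[(N/n)^{1-\alpha}]$, and handle both pieces with elementary tools: Chernoff concentration plus a Lipschitz bound on $x\mapsto x^{1-\alpha}$ away from the origin for the first piece (correctly noting that the singularity at $0$ when $0<\alpha<1$ is buried under an exponentially small event), and the lemma $\E[N^{-1/2}\mathbf{1}_{\{N\ge1\}}]=O(n^{-1/2})$ for the propagated inductive error. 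Your method is more self-contained — no external CLT rate theorem, no Laplace asymptotics — and it makes transparent that the $O(n^{-1/2})$ bottleneck is the propagated error $K\,\E[N^{-1/2}]$ rather than the Gaussian approximation itself (indeed $\E[(N/n)^{1-\alpha}]=s_t^{1-\alpha}+O(n^{-1})$ by a second-order expansion, so the CLT machinery is stronger than the stated rate requires); what the paper's heavier approach buys is a systematic template that would also deliver higher-order asymptotic expansions of $\lambda_{n,t}$. Two bookkeeping points you should tighten if you write this up: your phrasing of the uniform bounds on $z$ (the inequality $z^{\rho/(1-\rho)}\ge 1$ gives $z\ge 1$ only when $0<\rho<1$ and gives $z\le 1$ when $\rho<0$, so the two-sided bound must be assembled sign-by-sign), and the Lipschitz estimate $|z_{N}^{\alpha}-z_{\infty}^{\alpha}|\le K N^{-1/2}$ needs $\inf_i z_{i,t+\delta t}>0$ uniformly in $i$, which follows from the induction hypothesis for large $i$ together with positivity and finiteness of the finitely many remaining values — the paper is equally terse on this point, so this is a presentational rather than a substantive gap.
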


Our proof strategy will be to approximate an expectation
involving the binomial distribution with a Gaussian integral which
we can then estimate using Laplace's method.
To get a precise convergence result, we need some estimates
on the rate of convergence
in the Central Limit Theorem. The estimates given in \cite{bhattacharya}
suit our purposes well. For the reader's convenience we will summarize the result we will need.

We begin with some definitions.
A random variable $X$ is said to satisfy Cram\'er's condition if its characteristic
function $f_X$ satisfies
\begin{equation}
\sup \{ |f_X(t)| \mid t> \eta \} < 1
\label{eq:cramer}
\end{equation}
for all positive $\eta$.
Let $\Phi$ be the standard normal
distribution. Given a set $A \subseteq \R$, and a function $g$, $\omega_g(A)$ is defined to equal
\[
\omega_g(A):=\sup\{|g(x)-g(y)| \mid x, y \in A \}.
\]
The set $B_{\epsilon}(x)$ is the ball of radius $\epsilon$ around $x$.

Let $Q_n$ be the appropriately normalized $n$-th partial sum
of a sequence of independent identically
distributed random variables $X^{(i)}$ for which Cram\'er's condition
holds and which have finite moments have
all orders. Appropriately normalized
means normalized such that the central
limit theorem implies $Q_n$ converges
to $\Phi$ in distribution.
Then there exists a constant $c$ such that for any bounded measurable function $g$
\begin{equation}
|\int_\R g \, \ed(Q_n - \Phi)| \leq c\, \omega_g(\R) n^{-\frac{1}{2}}
+ \int \omega_g(B_{c n^{-\frac{1}{2}}} (x)) \ed \Phi(x).
\label{eq:bhattacharya}
\end{equation}

The full result given in \cite{bhattacharya}
is more general and more precise than we need.
Let us explain how the statements are related.
We have simplified our statement to the
one-dimensional case, we have assumed the $X^{(i)}$
are identically distributed and we have assumed all 
moments of $X^{(1)}$ exist. The statement
in \cite{bhattacharya} is therefore more
complex, and in particular involves
additional terms called $\rho_{s,n}$
defined in the one-dimensional case by
\[
\rho_{s,n}:=\frac{1}{n}(\sum_{i=1}^n E|\sigma_n X^{(i)}|^s)
\]
where
\[
\sigma_n^2 := n \left( \sum_{i=1}^n \Var( X^{(i)} ) \right)^{-1}.
\]
Our assumptions on $X$ guarantee that $\rho_{s,n}$ is independent of $n$ and so
we have been able to absorb these terms into
the constant $c$. In addition, our statement
uses Theorem 1 of \cite{bhattacharya},
together with remarks at the end of the second paragraph on page 242 about
Cram\'er's condition.

\bigskip

We are now ready to prove the desired convergence result.

\begin{proof}[Proof of Theorem \ref{thm:cvgcEpsteinZin}]
	We proceed by a backward induction on $t$. The result is trivial for the case $t=T$. We now assume the induction hypothesis
	\[
	z_{n,t+\delta t} = z_{\infty,t+\delta t} + O(n^{-\frac{1}{2}}).	
	\]
	We wish to compute $\tilde{\theta}_{n,t}$, but only the sum
	in the expression \eqref{eq:deftildetheta} is difficult to compute. We
	will call this sum $\lambda_{n,t}$, so
	\begin{equation}
	\lambda_{n,t}:=\sum_{i=1}^n \left( \frac{i}{n} \right)^{1-\alpha} S_t(n,i) z^{\alpha}_{i, t+\delta t}.
	\label{eq:defnlambdant}
	\end{equation}
	
	Heuristically, one can approximate with a Gaussian integral using the Central Limit Theorem and then apply Laplace's method to compute the limit as $n \to \infty$. This motivates the idea of decomposing the sum above into a ``left tail'' for small values of $i$, a central term for values of $i$ near the mean of the Binomial distribution $n p$, and a ``right tail'' for larger values of $i$. We will in fact bound the tails separately (Steps 1 and 2, below),
	and then we will be able to rigorously apply a Central Limit Theorem
	argument to the central term (Step 3).
	
	We compute
	\begin{equation}
	\frac{S_t(n,i-1)}{S_t(n,i)} = \frac{i s_t}{(1 - i + n) (1 - s_t)}.
	\label{eq:exponentialDecay}
	\end{equation}
	We note that
	\[
	i \leq \frac{(n+1) (1-s_t)}{(\lambda -1) s_t+1} \implies \frac{i s_t}{(1 - i + n) (1 - s_t)} \leq \frac{1}{\lambda} \implies \frac{S_t(i-1,n)}{S_t(i,n)} \leq \frac{1}{\lambda}.
	\]
	So we define an integer $N_{\lambda, t, n}$ by
	\[
	N_{\lambda, t, n} := \left\lfloor \frac{(n+1) (1-s_t)}{(\lambda -1) s_t+1} \right\rfloor
	\]
	and then equation \eqref{eq:exponentialDecay} will ensure that we
	have exponential decay of $S_t(i,n)$ as $i$ decreases
	\begin{equation}
	i\leq N_{\lambda, t, n} \implies S(n,i) \leq \lambda^{i-N_{\lambda,t, n}}
	S(n,N_{\lambda, t, n}).
	\label{eq:exponentialdecay}
	\end{equation}
	{\bf Step 1.} We can now estimate the left tail of \eqref{eq:defnlambdant}. There
	exists a constants $C_{1,t}$, $C_{2,t}$ such that
	\begin{align}
	\sum_{i=1}^{N_{3,t,n}}  \left( \frac{i}{n} \right)^{1-\alpha} S_t(n,i)
	&\leq C_{1,t} \int_1^{N_{3,t,n}} 3^{-N_{3,t,n}+i} S(n,N_{3, t, n}) \left(\left(\frac{1}{n}\right)^{1-\alpha}+1 \right)
	\, \ed i \nonumber \\
	&\leq C_{2,t} S(n,N_{3, t, n}) \left(\left(\frac{1}{n}\right)^{1-\alpha}+1 \right).
	\label{eq:lefttail1}
	\end{align}
	To estimate this term, we observe that
	\begin{align*}
	N_{2,t,n}-N_{3,t,n} &= \left \lfloor 
	\frac{(n+1) (1-s_t)}{s_t+1}
	\right \rfloor
	-
	\left \lfloor \frac{(n+1) (1-s_t)}{2 s_t+1}
	\right \rfloor \\
	&\geq \left \lfloor 
	\frac{(n+1) (1-s_t)}{s_t+1}
	-
	\frac{(n+1) (1-s_t)}{2 s_t+1}
	\right \rfloor -2 \\
	&= \left \lfloor 
	\frac{(n+1) s_t (1-s_t)}{(s_t+1)(2 s_t + 1)}
	\right \rfloor -2
	\end{align*}
	Hence by equations \eqref{eq:exponentialDecay} and \eqref{eq:lefttail1} we
	find
	\begin{align}
	\sum_{i=1}^{N_{3,t,n}}  \left( \frac{i}{n} \right)^{1-\alpha} S_t(n,i)
	&\leq C_{2,t} S(N_{2, t, n},n) 2^{-\left \lfloor 
		\frac{(n+1) s_t (1-s_t)}{(s_t+1)(2 s_t + 1)}
		\right \rfloor +2} \left(\left(\frac{1}{n}\right)^{1-\alpha}+1 \right) \nonumber \\
	&\leq C_{2,t} 2^{-\left \lfloor 
		\frac{(n+1) s_t (1-s_t)}{(s_t+1)(2 s_t + 1)}
		\right \rfloor +2} \left(\left(\frac{1}{n}\right)^{1-\alpha}+1 \right). \nonumber
	\end{align}
	which decays exponentially as $n\to \infty$. Our induction hypothesis
	ensures that the $z_{i,t+\delta t}^\alpha$ are bounded, so we may safely
	conclude that
	\begin{equation}
	\lambda_{n,t}:=\left( \sum_{i=N_{3,t,n}}^n \left( \frac{i}{n} \right)^{1-\alpha} S_t(n,i) z^{\alpha}_{i, t+\delta t} \right)
	+ O(n^{-\frac{1}{2}})
	\end{equation}
	{\bf Step 2.} We apply the same strategy to the right tail. This time we compute
	\[
	\frac{S_t(n,i+1)}{S_t(n,i)}=\frac{(1-s_t) (n-i)}{(i+1) s_t}
	\]
	we define
	\[
	M_{\lambda,t,n} = \left\lceil \frac{\lambda  n (1-s_t)-s_t}{\lambda (1-s_t)+s_t} \right\rceil
	\]
	to ensure that
	\[
	i \geq M_{\lambda, t, n} \implies  \frac{S_t(n,i+1)}{S_t(n,i)} \leq \frac{1}{\lambda}.
	\]
	Repeating the same argument as for the left tail tells us that
	\begin{equation}
	\lambda_{n,t}:=\left( \sum_{i=N_{3,t,n}}^{M_{3,t,n}} \left( \frac{i}{n} \right)^{1-\alpha} S_t(n,i) z^{\alpha}_{i, t+\delta t} \right)
	+ O(n^{-\frac{1}{2}})
	\label{eq:lambdamiddle}
	\end{equation}
	We note that $\left(\frac{i}{n}\right)^{1-\alpha}$ is monotonic in
	$i$ and that  $\frac{N_{3,t,n}}{n}$ and $\frac{N_{3,t,n}}{n}$ tend
	to finite, non-zero limits as $n\to \infty$. We deduce that
	there exists a constant $C_{3,t}$ such that
	\begin{equation}
	N_{3,t,n}\leq i \leq M_{3,t,n} \implies \left|\left( \frac{i}{n} \right)^{1-\alpha} \right| \leq C_{3,t}
	\label{eq:summandBound}.
	\end{equation}
	This implies that
	\[
	\sum_{i=N_{3,t,n}}^{M_{3,t,n}} \left( \frac{i}{n} \right)^{1-\alpha} S_t(n,i)
	\leq C_{3,t} \sum_{i=N_{3,t,n}}^{M_{3,t,n}} S_t(n,i) \leq C_{3,t}
	\]
	Using  this together with our induction hypothesis,
	we may obtain from \eqref{eq:lambdamiddle} that
	\begin{equation}
	\lambda_{n,t}:=\left( \sum_{i=N_{3,t,n}}^{M_{3,t,n}} \left( \frac{i}{n} \right)^{1-\alpha} S_t(n,i) z^{\alpha}_{\infty, t+\delta t} \right)
	+ O(n^{-\frac{1}{2}}).
	\label{eq:lambdazinfinity}
	\end{equation}	
	{\bf Step 3.} In order to apply the bound \eqref{eq:bhattacharya},
	we define a Bernoulli random variable $X_{i,t}$ which takes the value $1$ if
	the $i$-th individual survives at time $t$ and $0$ otherwise. Thus
	$S_t(n,i)$ is the probability that $\sum_{j=1}^n X_{j,n}=i$.
	We define scaled random variables $\tilde{X}_{j,t}$ 
	of mean $0$ and standard deviation $1$ by
	\[
	\tilde{X}_{j,t} = \frac{X_j - s_t}{\sqrt{s_t(1-s_t)}},
	\]
	and so the appropriately scaled partial
	sum $Q_n$ is given by
	\[
	Q_n = \frac{1}{\sqrt{n}} \sum_{i=1}^n \tilde{X}_n = \frac{1}{\sqrt n}
	\sum_{j=1}^n \frac{X_j - s_t}{\sqrt{s_t(1-s_t)}}
	=
	\frac{ \left(\sum_{i=1}^n X_i\right) - n s_t}{\sqrt{n s_t(1-s_t)}}.
	\]
	We now wish to rewrite \eqref{eq:lambdazinfinity} as an integral.
	\[
	\lambda_{n,t} = 
	\int
	\id_{[N_{3,t,n},M_{3,t,n}]}(i) \left( \frac{i}{n} \right)^{1-\alpha} z^{\alpha}_{\infty, t+\delta t} 
	\ed (\sum_{j=1}^n X_{j,t})(i)
	+ O(n^{-\frac{1}{2}}).
	\]
	
	We make the substitution $i=n s_t + x \sqrt{n s_t(1-s_t)}$ to find
	\begin{multline}
	\lambda_{n,t} = 
	\int \Big[
	\id_{[N_{3,t,n},M_{3,t,n}]}(n s_t + x \sqrt{n s_t(1-s_t)}) \\
	\times \left( \frac{n s_t + x \sqrt{n s_t(1-s_t)}}{n} \right)^{1-\alpha} z^{\alpha}_{\infty, t+\delta t} \Big]
	\, \ed Q_n(x)
	+ O(n^{-\frac{1}{2}}).
	\end{multline}
	We may rewrite this as
	\begin{equation}
	\lambda_{n,t} = 
	\int \id_{[{\ell_{n,t}},{u_{n,t}}]}(x)
	\left( \frac{n s_t + x \sqrt{n s_t(1-s_t)}}{n} \right)^{1-\alpha} z^{\alpha}_{\infty, t+\delta t}
	\, \ed Q_n(x)
	+ O(n^{-\frac{1}{2}})
	\label{eq:lambdaasintegral}
	\end{equation}
	where
	\[
	\ell_{n,t}:= (N_{3,t,n} - n s)/\sqrt{n s(1-s)}, \quad
	u_{n,t}:= (M_{3,t,n} - n s)/\sqrt{n s(1-s)}.
	\]
	From our expressions for $N_{3,t,n}$ and $M_{3,t,n}$ one
	readily sees that $\ell_{n,t}$ tends to $-\infty$ at a rate proportional to $O(-\sqrt{n})$
	as $n\to\infty$. Likewise $u_{n,t}$ tends to $+\infty$ at a rate $O(\sqrt{n})$ as $n\to\infty$. We will assume that $n$ is large enough to ensure that $\ell_{n,t} <0 <u_{n,t}$.
	
	Let us define $g$ by
	\begin{equation}
	g = \id_{[{\ell_{n,t}},{u_{n,t}}]}(x)
	\left( \frac{n s_t + x \sqrt{n s_t(1-s_t)}}{n} \right)^{1-\alpha}.
	\end{equation}
	By \eqref{eq:summandBound}, $g$ is bounded by a constant independent of $n$.
	Hence $\omega_g(\R)$ is bounded independent of $n$. We can bound the derivative of $g$ inside the interval $({\ell_{n,t}},{u_{n,t}})$, independent of $n$. Hence for
	any $x \in (\frac{1}{2}{\ell_{n,t}},\frac{1}{2}{u_{n,t}})$ and
	for sufficiently large $n$, $\omega_g(B_{c n^{-\frac{1}{2}}}(x))< C_{4,t} n^{-\frac{1}{2}}$
	for some constant $C_{4,t}$ independent of $n$. It follows that
	\begin{equation}
	\int \id_{[\frac{1}{2}{\ell_{n,t}},\frac{1}{2}{u_{n,t}}]}(x) \omega_g(B_{c n^{-\frac{1}{2}}}(x)) \, \ed \Phi(x) = O(n^{-\frac{1}{2}}).
	\label{eq:oscbound1}
	\end{equation}
	Since $\ell_{n,t}$ tends to $-\infty$ at a rate proportional to $O(-\sqrt{n})$, since $g$ is bounded, and since the normal distribution
	has super-exponential decay in the tails, we have
	\begin{equation}
	\int \id_{(-\infty,\ell_{n,t}]} \omega_g(B_{c n^{-\frac{1}{2}}}(x)) \, \ed \Phi(x) = O(n^{-\frac{1}{2}})
	\label{eq:oscbound2}
	\end{equation}
	and similarly
	\begin{equation}
	\int \id_{[u_{n,t},\infty)} \omega_g(B_{c n^{-\frac{1}{2}}}(x)) \, \ed \Phi(x) = O(n^{-\frac{1}{2}}).
	\label{eq:oscbound3}
	\end{equation}
	Estimates \eqref{eq:oscbound1}, \eqref{eq:oscbound2}, \eqref{eq:oscbound3}
	together with the bound on $\omega_g(\R)$ allow us to apply the Central Limit Theorem estimate \eqref{eq:bhattacharya} to \eqref{eq:lambdaasintegral}. We note that Cram\'er's condition holds. The result is
	\begin{equation*}
	\lambda_{n,t} = 
	\int \id_{[{\ell_{n,t}},{u_{n,t}}]}(x)
	\left( \frac{n s_t + x \sqrt{n s_t(1-s_t)}}{n} \right)^{1-\alpha} z^{\alpha}_{\infty, t+\delta t}
	\, \ed \Phi(x)
	+ O(n^{-\frac{1}{2}})
	\end{equation*}
	
	We now apply Laplace's method to estimate this Gaussian
	integral (see Proposition 2.1, page 323 in \cite{ss03} ) and obtain
	\begin{equation*}
	\lambda_{n,t} = 
	s_t^{1-\alpha}
	+ O(n^{-\frac{1}{2}})
	\label{eq:lambdaasgaussian}
	\end{equation*}
	From the definition of $\tilde{\theta}$ in equation \eqref{eq:deftildetheta} and our definition of $\lambda_{n,t}$
	in \eqref{eq:defnlambdant} we obtain
	\begin{equation*}
	\tilde{\theta}_{n,t} =  \beta^\frac{1}{\rho} \exp( \xi \, \delta t) 
	s_t^\frac{{1-\alpha}}{\alpha} z_{\infty,t+\delta t} + O(n^{-\frac{1}{2}}).
	\end{equation*}
	We may now compare this with the definition of $\theta_t$ given in \eqref{eq:defphi} for the infinitely collectivised case $C=1$.
	We see that in this case
	\begin{equation*}
	\tilde{\theta}_{n,t} = \theta_t + O(n^{-\frac{1}{2}}).
	\end{equation*}
	It now follows from the recursion relations for $z_{n,t}$
	and $z_{\infty,t}$ (given in \eqref{eq:valueFunction4}
	and \eqref{eq:valueFunction4Tilde} respectively) together with our induction hypothesis
	that 
	\begin{equation*}
	z_{n,t} = z_{\infty,t} + O(n^{-\frac{1}{2}}).
	\end{equation*}
	
	This completes the induction step and the proof.
\end{proof}

\section{Numerical Results}

We illustrate our results with some numerical examples. We will restrict our attention
to the case of von Neumann--Morgernstern preferences in this section. We refer the reader
to the numerical results of \cite{ab-main} where we give numerical results for more
general homogeneous Epstein--Zin preferences. In that paper we also compare the results with those obtained using exponential Kihlstrom--Mirman preferences.

For ease of comparison with \cite{ab-main} (and other pension models based
on \cite{obr2019}) we choose the parameter values given in Table \ref{table:parameterSummaryEZ}.
Due to the positive homogeneity of our model, the choice of value for $X_0$ is unimportant.

\begin{table}[h!]
	\begin{center}		
		\begin{tabular}{ll} \toprule
			Parameter & Value \\ \midrule			
			$r_{\CPI}$ & $0.02$ \\
			$r$ & $0.047-r_{\CPI}$ \\
			$\mu$ & $0.082-r_{\CPI}$ \\
			$\sigma$ & $0.15$ \\
			$\rho$ & $-1$  \\ \bottomrule			
		\end{tabular}		
	\end{center}
	\caption{Summary of parameters used in this model}
	\label{table:parameterSummaryEZ}
\end{table}

The mortality distribution we use is for women retiring at age 65 in 2019. We
obtained this distribution using the model ``CMI\_2018\_F [1.5\%]'' as described
in \cite{cmi2018}.

We define the {\em annuity equivalent} value of each investment-consumption approach. We define this to be the price of an annuity which would
give the same gain. We define the {\em annuity outperformance} by
\[
\text{annuity outperformance} := \frac{\text{annuity equivalent}}{\text{budget}}-1.
\]
This gives a measure of the performance of the strategy relative to an annuity of
the same cost.

\subsection{Dependence on the number of individuals, $n$}

In Figure
\ref{fig:impactOfN} we show how the annuity outperformance depends upon the number of individuals
$n$ in the collective. This plot shows that as few as $40$ individuals are required
to obtain most of the benefits of collectivisation. One does not need a large fund to benefit from a collective investment: simply sharing risk with one's partner brings a substantial benefit.

\begin{figure}[htb]
	\centering
	\includegraphics{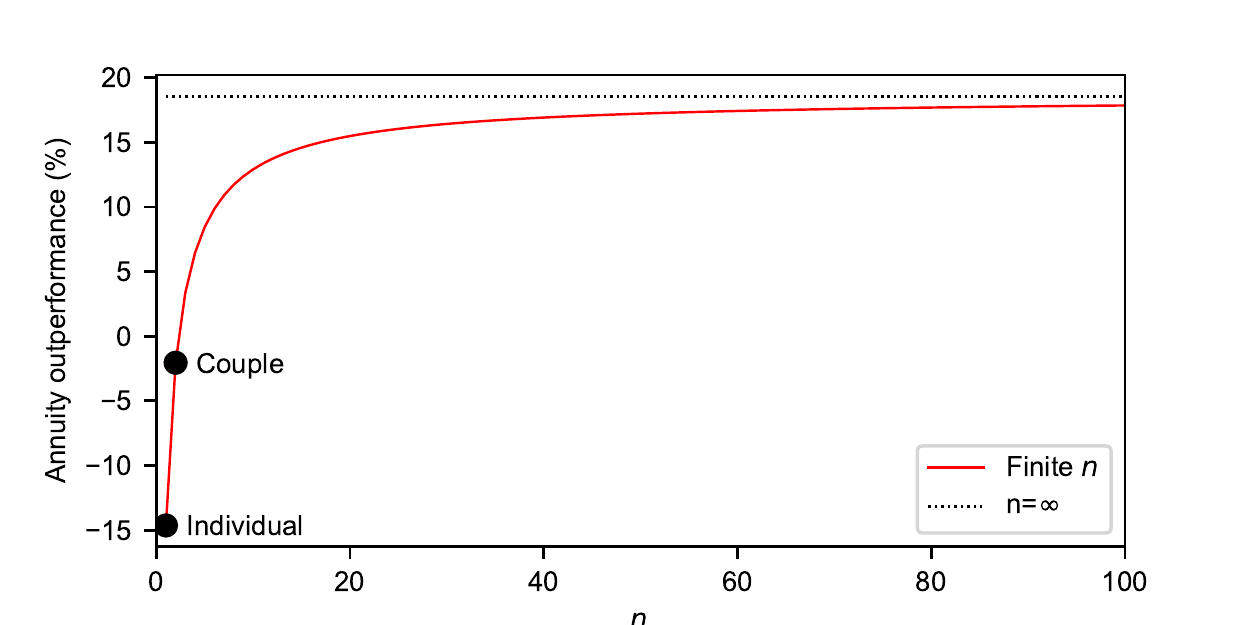}
	\caption{Dependence of annuity outperformance on the number of individuals in the collective, $n$.
		Calculation performed with von Neumann--Morgernstern preferences}
	\label{fig:impactOfN}
\end{figure}

\subsection{Dependence on market parameters}

To compare the relative impact of investment in the stock market,
inter-temporal substitution and collectivisation we have computed the
annuity outperformance for a number of different fund and market scenarios.
The results are shown in Table \ref{table:marketScenarios}. Except
where the table indicates a difference, the parameter values are described
in the previous section.

\begin{table}
	\begin{center}
		\begin{tabular}{rrrrr}
			\toprule	
			Scenario & $\mu$ & $r$ & $n$ & Annuity outperformance \\ \midrule
			1 & $0.062$ & $0.027$ & $\infty$ & $59.1\%$ \\
			2 & $0.062$ & $0.027$ & $1$      & $20.5\%$ \\
			3 & $0.027$ & $0.027$ & $\infty$ & $1.3\%$ \\
			4 & $0.000$ & $0.000$ & $\infty$ & $0\%$
		\end{tabular}
	\end{center}
	\caption{Annuity outperformance in a variety of scenarios}
	\label{table:marketScenarios}
\end{table}

If scenarios $A$ and $B$ have an annuity outperformance of $r_A$ and $r_B$ then we will say
that scenario $A$ gives an improvement of
\[
\frac{1+r_A}{1+r_B} - 1.
\]
over scenario $B$.

So comparing Scenario 1 with Scenario 2 we see that the impact of collectivisation in
this example is an improvement of $32\%$. Comparing Scenarios $1$ and $3$,
the impact of investing in stock, rather than just bonds, is even more significant,
yielding an almost $57\%$ improvement. Comparing Scenarios $3$ and $4$, we see
that exploiting inter-temporal substitution alone yields a relatively modest $1.3\%$
improvement.

\bibliography{collectivization}
\bibliographystyle{plain}

\end{document}